\documentclass[sigconf,nonacm]{acmart}

\setcopyright{none}

\usepackage{balance} 

\usepackage{enumitem}

\usepackage{tikz}
\usetikzlibrary{calc}
\usetikzlibrary{patterns}
\tikzstyle{mybox} = [draw=black, fill=white,  thick,
    rectangle, inner sep=10pt, inner ysep=20pt]
\tikzstyle{mybox} = [draw=black, fill=white,  thick,
    rectangle, inner sep=2pt, inner ysep=2pt]

\usepackage{algorithmic}
\usepackage{algorithm}

\title{Robust Multiagent Collaboration Through Weighted Max–Min 
T-Joins}

\author{Sharareh Alipour}
\email{sharareh.alipour@gmail.com}

\begin{abstract}
Many multiagent tasks---such as reviewer assignment, coalition formation, or fair resource allocation---require selecting a group of agents such that collaboration remains effective even in the worst case. The \emph{weighted max--min $T$-join problem} formalizes this challenge by seeking a subset of vertices whose minimum-weight matching is maximized, thereby ensuring robust outcomes against unfavorable pairings.

We advance the study of this problem in several directions. First, we design an algorithm that computes an upper bound for the \emph{weighted max--min $2k$-matching problem}, where the chosen set must contain exactly $2k$ vertices. Building on this bound, we develop a general algorithm with a \emph{$2 \ln n$-approximation guarantee} that runs in $O(n^4)$ time. Second, using ear decompositions, we propose another upper bound for the weighted max--min $T$-join cost. We also show that the problem can be solved exactly when edge weights belong to $\{1,2\}$.

Finally, we evaluate our methods on real collaboration datasets. Experiments show that the lower bounds from our approximation algorithm and the upper bounds from the ear decomposition method are consistently close, yielding empirically small constant-factor approximations. Overall, our results highlight both the theoretical significance and practical value of weighted max--min $T$-joins as a framework for fair and robust group formation in multiagent systems.

\end{abstract}

\keywords{$T$-join, Weighted max-min $T$-join, Weighted max-min $2k$-matching, Approximation algorithms}

\newcommand{\BibTeX}{\rm B\kern-.05em{\sc i\kern-.025em b}\kern-.08em\TeX}

\begin{document}


\pagestyle{fancy}
\fancyhead{}


\maketitle

\section{Introduction}

The $T$-join problem is an important generalization of matching problems. 
Let $T$ be an even-sized subset of a finite metric space $X$.  
A matching of $T$ pairs up its elements, and the cost of the matching is the sum of the distances within each pair.  
The classical $T$-join problem seeks the matching of $T$ with minimum total cost.  
The max–min $T$-join problem, in contrast, looks for an even-sized subset $T$ of $X$ whose minimum-cost matching is as large as possible.  
We denote this maximum of minimum costs by $\mu(X)$.

Let $G$ be a connected weighted graph with positive edge weights.  
This graph induces a metric on the vertex set $V$, where the distance between two vertices is the weight of the shortest path connecting them.

In such a graph, a subgraph $J$ of $G$ is called \emph{valid} if every cycle $C$ in $G$ satisfies  
\[
w(C) \ge 2\, w(C \cap J),
\]
where $w(C)$ is the total weight of the edges in $C$, and $w(C \cap J)$ is the total weight of the edges of $J$ that lie in $C$.
 
The max–min $T$-join problem for $G$ is equivalent to finding the maximum-weight valid subgraph of $G$.  
The equivalence works as follows: to each valid subgraph $J$, we associate the set of vertices that have odd degree in $J$.  
This set always has even size, and the weight of $J$ equals the cost of the minimum matching of these vertices in $G$.  
Hence, the maximum weight of a valid subgraph corresponds to the maximum–minimum $T$-join of $G$.

Following questions raised by Sol\'e and Zaslawsky~\cite{Sole-Zaslawsky}, Frank~\cite{frank93} studied the case where all edges have weight~$1$ and designed a polynomial-time algorithm running in $O(|V||E|)$ time.  
Iwata and Ravi~\cite{iwata2013tjoin} later extended the problem to general edge weights and developed a constant-factor approximation algorithm.

Another extension of the weighted max--min $T$-join problem is the \emph{weighted max--min $2k$-matching problem}, where the goal is to select $2k$ vertices of the graph so that their minimum weight matching is maximized.  
This problem naturally arises in multiagent systems.  
For example, a coordinator may need to preselect $2k$ responders before pairing them into buddy teams, ensuring that even under worst-case pairings the overall effectiveness remains high.  
In social or collaboration networks, one may wish to form groups where even the weakest partnership has sufficient strength, guaranteeing cohesion.  
In transportation and infrastructure planning, the objective identifies fragile subsets of a network whose worst-case connectivity cost is maximized, guiding the design of resilient routes.  
Finally, in fair task or resource allocation, maximizing the minimum weight matching ensures that no agent receives an extremely weak or undesirable assignment, aligning with fairness principles.

\subsection{Related works}

Our work lies at the intersection of robustness, matching, and subset selection in multiagent systems.    
Selecting a subset of agents that performs well under worst-case conditions relates to robust optimization, coalition formation, and fairness in multiagent systems.  

Okimoto et al.\ introduced Task-Oriented Robust Team Formation, defining \(k\)-robust teams that remain functional after any \(k\) agents drop out~\cite{okimoto15robustteam}, while Schwind et al.\ developed Partial Robustness in Team Formation (PR-TF), which allows partial recovery rather than full robustness~\cite{schwind21partial}.  
In matching under uncertainty, Robust Popular Matchings require a matching to remain popular under perturbations of preferences~\cite{bullinger24robustpopular}.  
Online matching under distributional drift considers robustness to changes in arrival distributions by maintaining worst-case guarantees over time~\cite{zhou2019robustonline}.  
In robust subset and feature selection, Robust Subset Selection by Greedy and Evolutionary Pareto Optimization maximizes the minimum performance across multiple objectives under uncertainty~\cite{bian2022robustsubset}.  
On the repairability side, Recoverable Team Formation studies teams that can be repaired after agent failures at minimal cost~\cite{demirovic18recoverable}.  
Work on team formation also explores compatibility and structure; for example, Towards Realistic Team Formation in Social Networks Based on Densest Subgraphs models compatibility via social links and formulates the team selection as a densest subgraph problem~\cite{rangapuram15team}.  
In peer review and assignment, recent work such as PeerReview4All provides fair and accurate reviewer assignment under load and coverage constraints~\cite{StelmakhSS19}.  
Taken together, these works show the growing interest in combining resilience, fairness, and assignment in AI systems; however, unlike them, our focus is on choosing an optimal subset so that the minimum matching weight within that subset is maximized under worst-case conditions.

\subsection{Our approach and new results}
This paper studies the weighted max--min $T$-join problem.  
We first present a simple greedy algorithm that provides an upper bound for the cost of the weighted max--min $2k$-matching problem in a metric space, 
inspired by the greedy $k$-center clustering algorithm of \cite{gonzalez85kcenter}.  
Given a set of points $P = \{p_1, \dots, p_n\}$ in a metric space, the algorithm starts with an arbitrary point 
and, in each iteration, adds the point farthest from the current set.  
This process defines an ordering of the vertices, and we show how the selected points relate to the optimal solution 
of the weighted max--min $2k$-matching problem.  
As a result, we obtain the first nontrivial upper bound for the weighted max--min $2k$-matching problem.  
This provides initial progress on an open question posed by \cite{iwata2013tjoin}, who gave a constant-factor approximation algorithm for the weighted max--min $T$-join problem but left the $2k$-matching case open.  
Furthermore, our approach also yields a logarithmic-approximation algorithm for the weighted max--min $T$-join problem 
with faster running time.  
The algorithm itself is simple, but the analysis of its approximation factor is more involved.  

Next, we study the weighted max--min $T$-join problem using a different approach.  
Our method is based on \emph{ear decompositions} of the graph and selects edges according to this decomposition 
and specific rules.  
This gives an upper bound, extending the method of \cite{frank93} from the unweighted to the weighted case.  

We also provide an exact algorithm for the weighted max--min $T$-join problem on a special class of graphs 
where edge weights are restricted to $1$ or $2$.  
This class is closely related to the $(1,2)$-TSP, a well-studied variant of the Traveling Salesperson Problem.  

Finally, we test our algorithms on real datasets.  
By combining the lower bound from our approximation algorithm with the upper bound from the ear decomposition method, the results show that the ratio between the two bounds is small in practice.

Both of our bounds can be computed in $O(n^4)$ time, whereas the algorithm of \cite{iwata2013tjoin} relies on the ellipsoid method.  
Consequently, our approach offers a more explicit and practically efficient running time.  
This demonstrates that, in practice, our algorithms are not only faster but also yield more effective results.  

\paragraph{Notations.}
In this paper, we assume all graphs are weighted.  
We denote by $\mu(G)$ the cost of the weighted max--min $T$-join in a graph $G$.  
For the weighted max--min $2k$-matching problem, the cost is written as $\mu_{2k}(G)$.  
Clearly,
\[
\mu(G) = \max_{1 \leq i \leq \lfloor n/2 \rfloor} \mu_{2i}(G).
\]

The $n$th harmonic number is
\[
H_n = 1 + \frac{1}{2} + \frac{1}{3} + \dots + \frac{1}{n}.
\]

\section{From the weighted max--min $2k$-matching problem to the weighted max-min $T$-join}
\label{secMMWM}

We start by presenting an upper bound for the cost of weighted max--min $2k$-matching problem in general metric spaces; 
that is, we assume the edge weights of $G= (V,E)$ satisfy the triangle inequality and that the distance between 
any two vertices $u$ and $v$, $d(u,v)$, is the length of the shortest path between them. 
Given $k$, the algorithm works as follows. Start with any point $v_1 \in V$ and set $C = \{v_1\}$. 
At each step, add to $C$ the point farthest from $C$. 
The distance of a point $p$ from $C$, $d(v,C)$, is defined as
\begin{eqnarray*}
  d(v, C) = \min_{v_i \in C} d(v, v_i).
\end{eqnarray*}

Continue this process until all vertices are added, yielding an ordering of the vertices denoted by $v_1, v_2, \dots, v_n$.

For a set of vertices $A\subseteq V(G)$ with an even number of elements, let $mwm(A)$ denote the cost of the minimum-weight perfect matching on $A$. 
We define
\[
    opt_{2k} = \max_{1 \leq i \leq k} mwm(v_1, \dots, v_{2i}).
\]
Obviously $opt_{2}\leq opt_{4}\leq \dots \leq opt_{2k}$.

We obtain the following upper bound for the cost of $\mu_{2k}(G)$.

\begin{theorem}
\label{mmwm}

Suppose $\{v'_1, \dots, v'_{2k}\} \subseteq V(G)$ is an optimal set for the weighted max--min $2k$-matching problem, 
i.e., $\mu_{2k}(G) = mwm(v'_1, \dots, v'_{2k})$  

Then,
\[
     \mu_{2k}(G)=mwm(v'_1, \dots, v'_{2k}) \;\leq\; 
    2(1+H_{k-1}) opt_{2k}.
\]

\end{theorem}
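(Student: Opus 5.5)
The plan is to compare both sides of the inequality with the sequence of greedy insertion distances. Define $r_j = d(v_{j+1},\{v_1,\dots,v_j\})$ for $1\le j\le n-1$. Two properties of the farthest-point ordering, in the spirit of the $k$-center analysis of \cite{gonzalez85kcenter}, drive the proof:
\begin{enumerate}[label=(\roman*)]
\item $r_1\ge r_2\ge\dots$ is nonincreasing.
\item $r_j$ is the covering radius of the prefix: every vertex $u\in V(G)$ satisfies $d(u,\{v_1,\dots,v_j\})\le r_j$, because $v_{j+1}$ was chosen as the farthest vertex.
\end{enumerate}
From these I will derive a lower bound on $opt_{2k}$ in terms of the $r_j$, and an upper bound on $mwm(v'_1,\dots,v'_{2k})$ in terms of the same quantities.

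\emph{Lower bound.} Fix $1\le i\le k$ and look at $v_1,\dots,v_{2i}$. For $l\le 2i-1$, the vertex $v_{l+1}$ is at distance $r_l\ge r_{2i-1}$ from every earlier $v_a$. Hence all pairwise distances among $v_1,\dots,v_{2i}$ are at least $r_{2i-1}$. Any perfect matching of these $2i$ vertices has $i$ pairs, so
\[
mwm(v_1,\dots,v_{2i})\ge i\, r_{2i-1}, \qquad\text{that is,}\qquad r_{2i-1}\le \frac{opt_{2k}}{i}.
\]
The case $i=1$ is simply $r_1=mwm(v_1,v_2)\le opt_{2k}$.

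\emph{Upper bound.} I will build a perfect matching of $\{v'_1,\dots,v'_{2k}\}$ greedily, one pair at a time. Suppose $m=2i$ points are still unmatched. Assign each of them to its nearest center among $v_1,\dots,v_{m-1}$; by (ii) each point lies within distance $r_{m-1}=r_{2i-1}$ of its center. There are $m$ points and only $m-1$ centers, so by pigeonhole two points share a center. By the triangle inequality they are at distance at most $2r_{2i-1}$; I match them and remove them. Repeating for $i=k,k-1,\dots,1$ gives
\[
mwm(v'_1,\dots,v'_{2k})\le 2\sum_{i=1}^k r_{2i-1}\le 2\,opt_{2k}\sum_{i=1}^k \frac1i = 2H_k\, opt_{2k}\le 2(1+H_{k-1})\,opt_{2k},
\]
where the last step uses $H_k = H_{k-1}+\frac1k\le 1+H_{k-1}$.

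The step I expect to need the most care is the upper bound. Tracking a fixed cluster assignment while merging clusters would let the distance bounds accumulate across rounds. The remedy is to reassign the remaining points to the prefix of $m-1$ centers from scratch in every round, so that only the fresh covering radius $r_{m-1}$ enters each round. Beyond that, I need to keep the indices consistent between the covering radius and the pairwise-separation bound, namely the $r_{2i-1}$ appearing in both halves. The paper's slightly weaker constant $1+H_{k-1}$ then follows immediately from $H_k\le 1+H_{k-1}$.
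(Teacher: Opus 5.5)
Your proof is correct. It rests on the same core ingredient as the paper: pigeonhole on nearest greedy centers, with a prefix's covering radius bounded by its pairwise separation and hence by its matching cost divided by the number of pairs. It is organized differently, however, and gives a slightly sharper constant.

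The paper inducts on $k$. It removes the closest pair from an optimal $(2k+2)$-set and bounds the remaining $2k$ points by $\mu_{2k}(G)$, then by the induction hypothesis together with $opt_{2k}\le opt_{2k+2}$. It bounds the removed pair by assigning the $2k+2$ points to the $2k$ centers $v_1,\dots,v_{2k}$, which costs $2\,opt_{2k}/k$.

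You instead peel pairs off one fixed set and never refer to optimal solutions of smaller size. In the round with $2i$ points left you use the $2i-1$ centers $v_1,\dots,v_{2i-1}$, one fewer than the number of points rather than two fewer. Since $r_{2i-1}$ is both the covering radius of that prefix and a lower bound on all pairwise distances in $v_1,\dots,v_{2i}$, that round costs at most $2\,opt_{2k}/i$. The paper's induction effectively charges $2\,opt_{2k}/(i-1)$ for $i\ge 2$ and $2\,opt_{2k}$ for $i=1$.

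Summing your rounds gives $2H_k\,opt_{2k}$. This implies the stated $2(1+H_{k-1})\,opt_{2k}$ and improves on it for $k\ge 2$. Your properties (i) and (ii) also make precise the paper's informal claim that every vertex is closer to the prefix $v_1,\dots,v_{2k}$ than any two prefix vertices are to each other.
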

\begin{proof}

The proof is by induction on $k$. For $k=1$, the optimal solution to the weighted max--min $2k$-matching problem 
is the diameter of $\{v_1, \dots, v_n\}$; that is, the distance between the farthest pair of vertices.
  
Suppose $v'_i$ and $v'_j$ are the endpoints of the diameter of $\{v_1, \dots, v_n\}$, 
i.e., the optimal solution for the max--min $2$-matching problem. Then
\[
d(v_1, v_2) \;\leq\; d(v'_i, v'_j) \;\leq\; d(v'_i, v_1) + d(v_1, v'_j) \;\leq\; 2d(v_1, v_2).
\]

Now, let $\{v'_1, \dots, v'_{2k}\}$ denote the optimal solution for the weighted max--min $2k$-matching problem, i.e., $\mu_{2k}(G) = mwm(v'_1, \dots, v'_{2k})$. By the induction hypothesis,
\[
mwm(v'_1, \dots, v'_{2k}) \;\leq\; 
2(1 + H_{k-1})\, opt_{2k}.
\]

We now prove the statement for $k+1$. 
Assume the optimal solution for the weighted max--min $(2k+2)$-matching problem is 
$\{v''_1, \dots, v''_{2k+2}\}$, i.e, $\mu_{2k+2}(G)=mwm(v''_1, \dots, v''_{2k+2})$.
Without loss of generality, assume that $d(v''_{2k+1}, v''_{2k+2})$ is the smallest distance among all pairs 
$\{v''_i, v''_j\} \subseteq \{v''_1, \dots, v''_{2k+2}\}$. 
Then we clearly have
\[
mwm(v''_1, \dots, v''_{2k+2}) \;\leq\; mwm(v''_1, \dots, v''_{2k}) + d(v''_{2k+1}, v''_{2k+2}).
\]

Also, since the optimal solution for the weighted max--min $2k$-matching problem is 
$\{v'_1, \dots, v'_{2k}\}$, we have 
\[
mwm(v''_1, \dots, v''_{2k}) \;\leq\; mwm(v'_1, \dots, v'_{2k}).
\]
By the induction hypothesis,
\[
mwm(v'_1, \dots, v'_{2k}) \;\leq\; 
2 (1+H_{k-1}) opt_{2k}.
\]

Moreover, by definition of $opt_{2i}$, we have $opt_{2k} \leq opt_{2k+2}$. 
Therefore,
\begin{eqnarray}
\label{aa}
mwm(v''_1, \dots, v''_{2k+2}) \;\leq\; 
2(1+H_{k-1}) opt_{2k+2} 
+ d(v''_{2k+1}, v''_{2k+2}).
\end{eqnarray}

Now, we provide an upper bound for $d(v''_{2k+1}, v''_{2k+2})$. 
According to our greedy algorithm, the order of the selected vertices is $v_1, \dots, v_{2k}$.
Suppose we cluster the points $v''_1, \dots, v''_{2k+2}$ so that each $v''_i$ is assigned to its 
closest point in $\{v_1, \dots, v_{2k}\}$. 
Then there exist two points, $v''_i$ and $v''_j$, that are both assigned to the same vertex, say $v_t \in \{v_1, \dots, v_{2k}\}$.
Hence,
\[
d(v''_{2k+1}, v''_{2k+2}) \;\leq\; d(v''_i, v''_j) 
\;\leq\; d(v''_i, v_t) + d(v_t, v''_j).
\]

According to our greedy algorithm, 
the distance from each point $u' \in \{v_{2k+1}, \dots, v_n\}$ to its closest point in $\{v_1, \dots, v_{2k}\}$ is smaller than the distance between any pair of points in $\{v_1, \dots, v_{2k}\}$. 
Otherwise, $u'$ would have been included among the points in $\{v_1, \dots, v_{2k}\}$. So, we have
\[
d(v''_i, v_t) \;\leq\; d(v_u, v_v) 
\quad \text{and} \quad 
d(v''_j, v_t) \;\leq\; d(v_u, v_v)
\]
for any $v_u, v_v \in \{v_1, \dots, v_{2k}\}$. Thus,

\[
d(v''_i, v_t) + d(v_t, v''_j) \;\leq\; 2d(v_u, v_v).
\]

Now let $d(v_u, v_v)$ be the weight of the smallest edge in the minimum weight perfect matching 
of $\{v_1, \dots, v_{2k}\}$. Then
\begin{eqnarray}
\label{bb}
d(v_u, v_v) \;\leq\; \frac{mwm(v_1, \dots, v_{2k})}{k} 
\;\leq\; \frac{opt_{2k}}{k} 
\;\leq\; \frac{opt_{2k+2}}{k}.
\end{eqnarray}

Therefore, by \ref{aa} and \ref{bb}, we have
\[
mwm(v''_1, \dots, v''_{2k+2}) \;\leq\ 2(1+H_{k}) opt_{2k+2}.
\]

which completes the proof.
\end{proof}
Theorem~\ref{mmwm} provides an upper bound for the cost of weighted max--min $2k$-matching problem. 
Starting from $v_1$, in each step we select the farthest point. Repeating this for $2k$ steps takes $O(nk)$ time. 

Computing the minimum weight perfect matching for a set 
$\{v_1, \dots, v_{2i}\}$ takes $O(i^3)$ time. So, overall, the running time for computing this upper bound is $O(nk + k^4)$. 

Note that the algorithm produces a set of size at most $2k$ whose minimum-weight matching gives an upper bound to $\mu_{2k}(G)$. 

Using Theorem~\ref{mmwm}, we present a logarithmic-factor approximation algorithm 
for the weighted max--min $T$-join problem.

\begin{theorem}
\label{main11}
Let 
\[
mwm(v_1, \dots, v_{2t}) = \max_{1 \leq i \leq \lfloor n/2 \rfloor} mwm(v_1,v_2, \dots, v_{2i}),
\]
and suppose $\{v^*_1, \dots, v^*_{2\ell}\}$ is the optimal solution for the weighted max--min $T$-join problem. 
Then
\begin{align*}
mwm(v_1, \dots, v_{2t})&\leq mwm(v^*_1, \dots, v^*_{2\ell}) \\
&\leq 2 (1+H_{\lfloor n/2 \rfloor-1}) mwm(v_1, \dots, v_{2t}).
\end{align*}

    \end{theorem}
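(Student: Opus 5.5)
The plan is to prove the two inequalities separately, using Theorem~\ref{mmwm} with $k=\ell$ for the nontrivial one.

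\textbf{Lower bound.} This part is immediate. The greedy prefix $\{v_1,\dots,v_{2t}\}$ is an even-sized subset of $V(G)$, so it is a feasible candidate $T$. Hence its minimum-weight matching is at most the max--min value:
\[
mwm(v_1,\dots,v_{2t}) \le \mu(G) = mwm(v^*_1,\dots,v^*_{2\ell}).
\]

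\textbf{Upper bound.} First observe that the optimal $T$-join set, of size $2\ell$, is in particular optimal for the max--min $2\ell$-matching problem. If some other $2\ell$-set had a larger minimum matching, it would also beat $\mu(G)$. Therefore
\[
\mu(G) = mwm(v^*_1,\dots,v^*_{2\ell}) = \mu_{2\ell}(G),
\]
and Theorem~\ref{mmwm} applies with $k=\ell$:
\[
mwm(v^*_1,\dots,v^*_{2\ell}) \le 2(1+H_{\ell-1})\, opt_{2\ell}.
\]

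Next I relate both quantities on the right to the greedy prefix $v_1,\dots,v_{2t}$.
\begin{itemize}
\item Since $opt_{2\ell} = \max_{i\le \ell} mwm(v_1,\dots,v_{2i})$ and $\ell \le \lfloor n/2\rfloor$, this maximum runs over a subset of the indices used to define $t$. Hence $opt_{2\ell} \le mwm(v_1,\dots,v_{2t})$.
\item Since $\ell \le \lfloor n/2\rfloor$, we have $H_{\ell-1} \le H_{\lfloor n/2\rfloor -1}$.
\end{itemize}
Chaining these gives the stated upper bound.

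\textbf{Main obstacle.} The one point needing care is confirming that Theorem~\ref{mmwm} really applies to the metric in question. The greedy ordering and the argument there use shortest-path distances, and $\mu(G)$ is defined with respect to the same shortest-path metric, so this is consistent. I would also remark that the orderings and the choice of $v_1$ are the same ones used in Theorem~\ref{mmwm}.

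If one wants the $2\ln n$ form quoted in the abstract, it follows from $1+H_{m-1} \le 1+\ln m \le \ln n$ for $m=\lfloor n/2\rfloor$ and moderate $n$. This is a routine estimate and is not needed for the statement as written.
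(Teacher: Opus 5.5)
Your proof is correct and follows the paper's own argument: the lower bound comes from feasibility of the greedy prefix, and the upper bound comes from Theorem~\ref{mmwm} with $k=\ell$, followed by $opt_{2\ell}\le mwm(v_1,\dots,v_{2t})$ and $H_{\ell-1}\le H_{\lfloor n/2\rfloor-1}$; your explicit check that the $T$-join optimum is also optimal for the $2\ell$-matching problem is a step the paper leaves implicit. The one slip is in your optional closing aside, where both estimates go the wrong way ($H_{m-1}>\ln m$ for $m\ge 2$, and $1+\ln\lfloor n/2\rfloor>\ln n$ for $n\ge 4$), so the factor $2(1+H_{\lfloor n/2\rfloor-1})$ is $2\ln n+O(1)$ rather than at most $2\ln n$; this does not affect the theorem as stated.
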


\begin{proof}
Since the minimum weight matching of any set of points is less than $\mu(G)=mwm(v_1^*,\dots,v_{2\ell})$, so clearly we have
\[
mwm(v_1,\dots,v_{2t})\leq mwm(v_1^*,\dots,v^*_{2\ell}).
\]

Now we prove the other inequality.

  Using Theorem~\ref{mmwm}, we have
\[
mwm(v^*_1, \dots, v^*_{2\ell}) \;\leq\; 
2(1+H_{\ell-1}) opt_{2\ell}.
\]

Of course, the value of $\ell$ is unknown. 
Therefore, if we choose
\[
mwm(v_1, \dots, v_{2t}) = \max_{1 \leq i \leq \lfloor n/2 \rfloor} mwm(v_1, \dots, v_{2i}),
\]
then
\begin{align*}
mwm(v_1, \dots, v_{2\ell}) 
&\;\leq\; mwm(v_1, \dots, v_{2t}) 
   \;\leq\; mwm(v^*_1, \dots, v^*_{2\ell}) \\[6pt]
&\;\leq\; 2(1+H_{\ell-1}) opt_{2\ell} \\[6pt]
&\;\leq\; 2(1+H_{\ell-1}) mwm(v_1, \dots, v_{2t}) \\[6pt]
&\leq\; 2(1+H_{\lfloor n/2 \rfloor-1} )opt_{2\lfloor n/2 \rfloor}.
\end{align*}
So, we have 
\[
 mwm(v^*_1, \dots, v^*_{2\ell})\leq 2(1+H_{\lfloor n/2 \rfloor-1}) mwm(v_1, \dots, v_{2t}).
\]
This completes the proof.
\end{proof}

According to Theorem~\ref{main11}, we need to compute $opt_{2\lfloor n/2 \rfloor}$, 
so our algorithm runs in $O(n^4)$ time.  
As mentioned in the introduction, the weighted max--min $T$-join problem also admits 
a constant-factor approximation algorithm~\cite{iwata2013tjoin}, 
but it relies on the ellipsoid method and its running time is not stated explicitly. 
Our algorithm is much simpler, and experiments show that the values of $opt_{2k}$ 
provide tighter bounds in practice.

\section{Ear decomposition and weighted max–min $T$-join problem}
\label{secupperbound}

In this section, we present an algorithm that computes an upper bound for the weighted max--min $T$-join problem.  
This bound applies even when the points do not form a metric space.  

Given a connected weighted graph $G$, the goal is to choose a set of edges with maximum total weight, subject to the condition that in every cycle of the graph, the total weight of the chosen edges is at most half of the total weight of that cycle.  
 According to \cite{iwata2013tjoin}, this is equivalent to selecting a set of vertices whose minimum weight matching is maximized.  

We call any set of edges satisfying the above condition a \emph{valid set of edges}.  
Thus, the problem is to find a valid set of edges with maximum total weight.  
By this definition, any bridge edge (that is, an edge whose removal disconnects the graph) must be included in every optimal valid set of edges.  

Our approach extends the work of Frank~\cite{frank93}, who studied the unweighted version of this problem using ear decompositions of graphs.

We recall the necessary definitions.  
An \emph{ear decomposition} of $G$ is a sequence of subgraphs 
$G_0, G_1, \dots, G_t = G$, where $G_0$ consists of a single vertex and no edges, 
and each $G_i$ is obtained from $G_{i-1}$ by adding a path or a cycle $P_i$.  
The endpoints of $P_i$ (which coincide when $P_i$ is a cycle) lie in $G_{i-1}$, while its remaining vertices are new.  
Each $P_i$ is called an \emph{ear}; if $P_i$ is a single edge, it is called \emph{trivial}.  
The collection $P = \{P_1, \dots, P_t\}$ is also referred to as an ear decomposition, 
and the length of an ear is the number of its edges.  
It is well known that a graph admits an ear decomposition if and only if it is 2-edge-connected.  
A graph is 2-edge-connected if it is connected and remains connected after the removal of any single edge—equivalently, it is a connected graph with no bridge edges.

A graph $G$ is called \emph{factor-critical} if, for every vertex $v \in V(G)$, the graph $G - v$ has a perfect matching.  
The following classical result is due to Lovász~\cite{lovasz1972note}.

\begin{theorem}[\cite{lovasz1972note}]
\label{upperbound}
A graph $G$ is factor-critical if and only if it possesses an odd ear decomposition.  
Moreover, for any edge $e$ of a critical graph $G$, there exists an odd ear decomposition 
of $G$ (i.e, each path has an odd number of edges) in which the first ear contains $e$.
\end{theorem}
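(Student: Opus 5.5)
The statement to prove is Lovász's theorem: $G$ is factor-critical if and only if it has an odd ear decomposition, and for any edge $e$ there is such a decomposition whose first ear contains $e$. I treat the two directions separately, proving the easy one by induction on the number of ears and the hard one by growing a maximal odd-ear subgraph. Throughout, "odd ear decomposition" is read in Lovász's sense: $G_0$ is a single vertex, and every ear $P_i$ (path or cycle) has an odd number of edges, so $P_1$ is an odd cycle through $G_0$.

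\emph{Sufficiency.} I induct on the number of ears, with hypothesis ``$G_{i-1}$ is factor-critical.'' The base case is $G_0$, a single vertex, which is trivially factor-critical.

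For the step, let $P_i$ have $2m+1$ edges, endpoints $a,b \in G_{i-1}$, and $2m$ new internal vertices. Fix a vertex $v$ of $G_i$.
\begin{itemize}
\item If $v \in G_{i-1}$, take a perfect matching of $G_{i-1}-v$ and add the $m$ alternate internal edges of $P_i$ that avoid $a$ and $b$.
\item If $v$ is an internal vertex of $P_i$, then $P_i - v$ splits into two subpaths, one of even length and one of odd length. Suppose the odd one ends at $a$. Match $G_{i-1}-a$ perfectly. Then cover $a$ together with the internal vertices of the odd subpath using alternate edges of that subpath starting at $a$. Cover the even subpath's internal vertices with its alternate edges, leaving its endpoint in $G_{i-1}$ unused.
\end{itemize}
The cycle case $a=b$ is identical. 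So $G_i$ is factor-critical.

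\emph{Necessity.} Let $G$ be factor-critical (hence connected) and fix an edge $e=uv$.

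First I build an odd cycle through $e$. Take a perfect matching $M_u$ of $G-u$; it matches $v$ to some vertex $w$. Take a perfect matching $M_v$ of $G-v$. The symmetric difference $M_u \triangle M_v$ contains an alternating path from $u$ to $v$, of even length since both ends are exposed in different matchings. Adding $e$ to this path gives an odd cycle containing $e$, which serves as $P_1$ with $G_0$ a vertex on it.

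Next, let $H$ be a maximal subgraph of $G$ that contains $P_1$ and has an odd ear decomposition starting with $P_1$. By the sufficiency direction, $H$ is factor-critical. Suppose $H \neq G$, aiming for a contradiction.
\begin{itemize}
\item If some edge of $G$ outside $H$ has both ends in $H$, it is a trivial ear of odd length $1$, contradicting maximality. So $H$ is an induced subgraph.
\item Since $G$ is connected, some vertex $x \in H$ has a neighbor outside $H$. Take a perfect matching $M$ of $G-x$.
\item Because $|V(H)|$ is odd, $M$ cannot match $V(H)\setminus\{x\}$ entirely inside $H$ and $V(G)\setminus V(H)$ entirely inside itself in a way that avoids every edge leaving $H$ unless that is forced. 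The intended use is: starting from $x$ along an edge to $y \notin H$, follow an $M$-alternating path until it returns to $H$.
\item Such a path leaves $H$ on a non-matching edge and re-enters either on a matching edge or at a vertex of $H$. Parity bookkeeping then shows that it, or a suitably chosen alternative, is an odd ear. Adding it enlarges $H$, a contradiction.
\end{itemize}

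The main obstacle is this last step: guaranteeing that the alternating walk from $x$ returns to $H$ at all and has odd length. The walk cannot terminate outside $H$, since every vertex of $G-x$ is $M$-covered, and it cannot repeat vertices, since $M$ is a matching. So it must come back into $H$. The delicate case is when it re-enters $H$ through an edge $yz$ with $z \in H$, giving an even ear. I would handle that case by changing the matching: use a perfect matching of $G-z$ instead, or combine with a matching of $H - z$ via symmetric difference to flip parity. This mirrors the standard proof in Lovász–Plummer.
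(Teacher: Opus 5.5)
The paper does not prove this statement; it quotes it from Lov\'asz, so your proposal has to stand on its own. Your sufficiency induction is correct, and so is the construction of the odd cycle $P_1$ through $e$ from $M_u\triangle M_v$. The necessity part fails at its central step, the claim that a proper $H$ admits an odd ear.

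Apart from deducing that $H$ is induced, you never use maximality. Your alternating-walk argument would therefore show that every proper, induced, odd-ear-decomposable subgraph of a factor-critical graph admits an odd ear, and that claim is false. Take a triangle $h_1h_2h_3$, three vertices $a_1,a_2,a_3$ each adjacent to all $h_j$, and a vertex $b$ adjacent to all $a_i$. This graph is factor-critical:
\begin{itemize}
\item delete $b$: use $a_ih_i$;
\item delete $a_1$: use $ba_2,a_3h_1,h_2h_3$;
\item delete $h_1$: use $ba_1,a_2h_2,a_3h_3$;
\item the remaining cases follow by symmetry.
\end{itemize}
Here $H=h_1h_2h_3$ is induced and has an odd ear decomposition. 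But $G-V(H)$ is a star centred at $b$ whose leaves are the only vertices adjacent to $H$, so every ear attached to $H$ has length $2$ or $4$. Your walk from $h_1$ always re-enters $H$ along a matching edge, and no change of matching can flip the parity.

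Your particular $P_1$ does not rescue the argument. Replace $h_1$ by a triangle $pqr$ in which $p$ inherits all edges of $h_1$. For $e=pq$ your construction necessarily returns $P_1=pqr$. After the odd ear $p\,h_2\,h_3\,p$ you reach an induced subgraph with no odd ear. Yet the graph does have an odd ear decomposition starting with $P_1$, for example one continuing with $p\,a_1\,h_2\,p$. So greedy ear-by-ear extension cannot work.

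There are also two smaller errors. First, $|V(H)\setminus\{x\}|$ is even, so $M$ may match it entirely inside $H$. Second, the non-matching steps of an alternating walk can revisit vertices.

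Your first ear is nevertheless the right one. What is missing is to contract it and induct on $|V(G)|$ instead of growing $H$.
\begin{itemize}
\item The set $N=M_u\setminus E(P_1)$ is a perfect matching of $G-V(P_1)$. This covers deletion of the contracted vertex $c$ of $G/P_1$.
\item For $x\notin V(P_1)$, let $M_x$ be a perfect matching of $G-x$. In $N\triangle M_x$ the vertex $x$ has degree one, and the other degree-one vertices are exactly the vertices of $P_1$. So the component of $x$ is an alternating path $Q$ from $x$ to a vertex of $P_1$, and $N\triangle E(Q)$ is a perfect matching of $G/P_1-x$.
\item Hence $G/P_1$ is factor-critical and has fewer vertices.
\end{itemize}
Prove the rooted version by this induction: for every vertex $r$ there is an odd ear decomposition with $G_0=\{r\}$, and your cycle construction with $u=r$ supplies its first ear. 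Apply it to $G/P_1$ rooted at $c$, then lift the decomposition to $G$. Ears at $c$ become odd ears ending on $P_1$. Chords of $P_1$ and surplus parallel edges are appended as trivial ears. This yields an odd ear decomposition whose first ear contains $e$.
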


Using this characterization, Frank \cite{frank93} proved that for an unweighted graph $G$ (i.e. the weight of each edge is $1$), one has  
\[
\mu(G) = \frac{\varphi(G) + |V| - 1}{2},
\]
where $\varphi(G)$ denotes the minimum number of edges whose contraction yields a factor-critical graph.
Frank’s proof begins by showing that 

\begin{eqnarray}
\label{f}
\varphi(G) \geq 2\mu(G) - |V| + 1.  
\end{eqnarray}

We extend this result to the weighted graphs as follows.  
Given a graph $G$, we first contract all bridges to get a 2-edge-connected graph $G''$. 

If we compute the optimal solution for the weighted max--min $T$ problem on $G''$,  
then adding back the bridge edges gives the optimal solution for $G$.  
Thus, without loss of generality, we may assume that $G$ is 2-connected,  
and therefore admits an ear decomposition.

In an ear decomposition of $G$, for an ear $P$, let $w(P)$ be the sum of its edge weights.
We start with the following lemma.
\begin{lemma}
\label{le1}
Let $E' = \{e_1, \dots, e_t\}$ be a valid set of edges of the graph $G$, and let $u$ and $v$ be two vertices of $G$.  
Choose $P$ to be a path from $u$ to $v$ (or a cycle if $u = v$) such that 
\[
(2\sum_{e \in P \cap E'} w(e)) - w(P)
\]
is minimized among all possible paths from $u$ to $v$ (or cycles if $u = v$).  
Let $E_1 = P \cap E'$ and $E_2 = P \setminus E_1$.  
Then $(E' \setminus E_1) \cup E_2$ is also a valid set.

\end{lemma}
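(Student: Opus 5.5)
The plan is to recast validity as a sign condition on a single signed weight, write the effect of swapping $E_1$ for $E_2$ as a sign flip on $P$, and then check every cycle by an exchange argument against the extremality of $P$. As I explain below, this exchange step only closes with the opposite extremality of $P$, and the statement as worded appears to be false.

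Define $c(e)=w(e)$ for $e\in E'$ and $c(e)=-w(e)$ for $e\notin E'$. For any edge set $F$ write $c(F)=\sum_{e\in F}c(e)$. Then
\[
c(F)=2\,w(F\cap E')-w(F).
\]
So $E'$ is valid exactly when $c(C)\le 0$ for every cycle $C$, and $P$ is chosen to minimize $c(P)$. The new set is $E''=(E'\setminus E_1)\cup E_2=E'\,\triangle\,P$. Its signed weight $c''$ agrees with $c$ off $P$ and equals $-c$ on $P$. Hence for every cycle $C$
\[
c''(C)=c(C\setminus P)-c(C\cap P)=c(C)-2\,c(C\cap P),
\]
and the task is to show this is $\le 0$ for all $C$.

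Cycles disjoint from $P$ are handled by validity of $E'$. For a cycle $C$ meeting $P$, I would cut $C$ at the vertices where it enters and leaves $P$. This splits $C\setminus P$ into arcs $A$, each joining two vertices $x,y$ of $P$. Replacing the subpath $P[x,y]$ by $A$ gives another $u$–$v$ walk. Together with $c(A)+c(P[x,y])\le 0$ on the cycle $A\cup P[x,y]$, one wants to sum over the arcs and conclude
\[
c(C\setminus P)\le c(C\cap P).
\]

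This exchange step is the main obstacle, and it does not close with the stated extremality. Minimality of $c(P)$ yields
\[
c(P[x,y])\le c(A),
\]
which is the reverse of what is needed. The inequality needed is exactly what one gets when $P$ maximizes $c$, that is, when $P$ is a shortest path for the conservative weights $-c$.

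A concrete check confirms the worry. Take a $4$-cycle $a b c d$ with unit weights and $E'=\{ab,cd\}$, which is valid since $2\cdot 2-4=0$. Let $u=b$ and $v=c$. The path $bc$ has $c=-1$ and the path $b\,a\,d\,c$ has $c=+1$, so the stated rule picks $P=bc$. Then $E''=\{ab,bc,cd\}$ and the $4$-cycle has
\[
2\cdot 3-4>0,
\]
so $E''$ is not valid.

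So my plan would be to run the argument above, and at the exchange step record that it goes through only for the opposite optimization. With this wording, the lemma holds only if validity is read as $2w(C\cap J)\ge w(C)$ for every cycle, so that both conditions flip together. Under the cycle condition stated in the introduction, the lemma as worded fails.
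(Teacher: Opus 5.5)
Your conclusion is correct: as worded, the lemma is false, and your $4$-cycle is a valid counterexample. There, $E'=\{ab,cd\}$ is valid with equality. The two $b$--$c$ paths have values $-1$ (the edge $bc$) and $+1$ (the path $badc$), so the stated rule picks $P=bc$, and $\{ab,bc,cd\}$ violates the $4$-cycle.

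The paper's proof is a one-line contradiction. It claims that a cycle violated by the new set yields a $u$--$v$ path $P'$ with \emph{smaller} value $2w(P'\cap E')-w(P')$. Your exchange computation shows that rerouting along such a cycle gives a path with \emph{larger} value. That is exactly what happens in your example: the violated $4$-cycle reroutes $bc$ to $badc$. So the extremality is reversed, and $P$ should \emph{maximize} $c(P)=2w(P\cap E')-w(P)$.

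The same reversed sign appears in case 2 of the ear-decomposition theorem. With the corrected choice of $P''$ that argument does go through, because validity of $E^*$ on the cycle $P_{k+1}\cup P''$ bounds $c(P'')$ for every $u$--$v$ path $P''$ of $G'$.

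There are two refinements to your write-up.

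\emph{The exchange step.} Your arc-by-arc exchange does not close by itself even in the corrected direction once $C$ meets $P$ in several segments. The subpaths $P[x,y]$ attached to different arcs can overlap and need not add up to $C\cap P$, so summing the per-arc inequalities does not give $c(C\setminus P)\le c(C\cap P)$. Use a single exchange instead. With $c''$ as you defined it,
\[
c''(C)=c(C\setminus P)-c(C\cap P)=c(P\triangle C)-c(P).
\]
For $u\neq v$, the set $P\triangle C$ has odd degree exactly at $u$ and $v$. So it splits into a $u$--$v$ path $Q$ and edge-disjoint cycles $D_j$, each with $c(D_j)\le 0$ by validity of $E'$. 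Hence $c(P\triangle C)\le c(Q)\le c(P)$ when $P$ maximizes $c$. This identity also explains the failure directly: the new set is valid if and only if $c(P\triangle C)\le c(P)$ for every cycle $C$.

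\emph{The case $u=v$.} This clause is false in \emph{both} orientations. In a unit-weight triangle with $E'=\emptyset$, the only cycle through $u$ has $c=-3$, and flipping it gives all three edges, with $2\cdot 3>3$. So the corrected lemma needs $u\neq v$, or must allow the empty cycle.

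The same triangle with $J$ equal to all three edges shows that your alternative repair, reversing the validity inequality, does not save the $u=v$ clause either. Reversed validity is also not the notion the paper needs. In the ear-decomposition theorem this is harmless, since case 2 cannot occur for a closed ear.
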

\begin{proof}
Assume that $(E'\setminus E_1)\cup E_2$ is not valid. This mean that there is a cycle that crosses the edges of $P$ such that is violates the condition.
 So, we can find another path $P'$ connecting $u$ and $v$ such that $2\sum_{e\in P'\cap E'(G)}-w(P')$ is less than  $2\sum_{e\in P\cap E'(G)}-w(P)$ which is a contradiction,

\end{proof}

Now, for a given path $P$, define $\max(P)$ as the largest sum of edge weights from $P$ that does not exceed $w(P)/2$.  
If $\{P^*_1, \dots, P^*_t\}$ is an ear decomposition of $G$ minimizing $\sum_{i=1}^t \max(P^*_i)$,  
then we have 
\[
\mu(G) \leq \sum_{i=1}^t \max(P^*_i).
\]
We show this inequality by proving the following theorem.

\begin{theorem}
\label{upperbound}
Let $\{G_0, \dots, G_t = G\}$ be an ear decomposition of $G$,  
where each $G_i$ is obtained from $G_{i-1}$ by adding a path $P_i$.  
Then,
\[
\mu(G) \;\leq\; \sum_{i=1}^t \max(P_i).
\]

\end{theorem}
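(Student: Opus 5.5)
The plan is to take an optimal valid set $J$ with $w(J)=\mu(G)$. Using Lemma~\ref{le1} as an exchange tool, I would reshape $J$ ear by ear so that its restriction to each ear $P_i$ has weight at most $\max(P_i)$. Summing over the ears, which partition $E(G)$, then gives $\mu(G)=w(J)=\sum_i w(J\cap P_i)\le\sum_i\max(P_i)$.

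The naive per-ear bound is not immediate. Validity only says that every \emph{cycle} $C$ satisfies $w(C\cap J)\le w(C)/2$. An open ear $P_i$ with distinct endpoints is not a cycle by itself; it closes into a cycle only through a path $Q$ in $G_{i-1}$. So I would work backwards from $i=t$ down to $1$ and prove the following claim for each ear.

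\emph{Claim.} There is a valid set $J'$ with $w(J')\ge w(J)$ and $w(J'\cap P_i)\le \max(P_i)$, where $J'$ agrees with $J$ on $P_{i+1},\dots,P_t$.

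To prove the claim, let $u,v$ be the endpoints of $P_i$. If $u=v$, the ear is a cycle. Validity then directly gives $w(J\cap P_i)\le w(P_i)/2$, and since $w(J\cap P_i)$ is itself a sum of edge weights from $P_i$, it is at most $\max(P_i)$. If $u\neq v$, I would choose a $u$–$v$ path $Q$ in $G_{i-1}$ minimizing $2w(Q\cap J)-w(Q)$, as in Lemma~\ref{le1}, and consider the cycle $C=P_i\cup Q$. Then
\[
w(J\cap P_i)\le \tfrac12 w(P_i)+\bigl(\tfrac12 w(Q)-w(J\cap Q)\bigr).
\]
When $2w(J\cap Q)\ge w(Q)$, this gives $w(J\cap P_i)\le w(P_i)/2$, hence $w(J\cap P_i)\le\max(P_i)$, and $J'=J$ works. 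Otherwise $Q$ is ``under-used''. In that case I would apply Lemma~\ref{le1} to swap $J$ on the cycle $C$ (or on $Q$): remove $J\cap C$ and add $C\setminus J$, or a suitable variant. The aim is a valid set of no smaller total weight in which the excess on $P_i$ has been moved onto $Q$, inside $G_{i-1}$, leaving at most $\max(P_i)$ on $P_i$. Since the later ears $P_{i+1},\dots,P_t$ are untouched, the backward induction proceeds, and the earlier ears are handled at their own turns.

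The main obstacle is the case where $Q$ is under-used. There I must show that the exchange keeps validity, which is Lemma~\ref{le1}'s role. I must also show it does not decrease the total weight. That requires comparing $w(J\cap P_i)-\max(P_i)$ with the gain $\tfrac12 w(Q)-w(J\cap Q)$ on $Q$. I would first try choosing the exchange so that exactly the edges of $P_i\cap J$ exceeding the budget are traded for edges of $Q\setminus J$. The minimality of $Q$ is what guarantees that no other cycle through $Q$ becomes violated. If that accounting fails, I would fall back to an averaging or potential argument: show that the quantity $\sum_i \bigl(w(J\cap P_i)-\max(P_i)\bigr)^+$ strictly decreases under the swap, so the process terminates with all ears within budget.
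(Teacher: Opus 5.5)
\paragraph{The Claim is false.} Your Claim fails, so no single optimal valid set can meet every ear budget. Let $P_1$ be the closed ear $u\,v\,y$ with $w(uv)=19$ and $w(vy)=w(yu)=10$. Let $P_2=u\,x\,v$ with $w(ux)=2$ and $w(xv)=20$; these weights even form a metric. The three cycles have weights $39$ (triangle $uvy$), $41$ ($uv$ with $P_2$) and $42$ ($u\,y\,v\,x$). A direct check gives $\mu(G)=20$, attained \emph{only} by $J=\{xv\}$, while $\max(P_1)=19$ and $\max(P_2)=2$. Every valid set with weight at most $2$ on $P_2$ has total weight at most $19$; for example $\{uv\}$ works, but $\{uv,ux\}$ violates the $41$-cycle. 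So your Claim already fails at the first step $i=t=2$. The inequality $20\le 21$ holds only because unused budget on $P_1$ absorbs the overload on $P_2$. Per-ear bounds for one valid set are therefore unavailable, and no exchange on $P_i\cup Q$ and no potential-function fallback can produce them. Note also that switching on the whole cycle $P_i\cup Q$ keeps validity only if $w((P_i\cup Q)\setminus J)=w((P_i\cup Q)\cap J)$. Switching on $Q$ alone leaves $P_i$ overloaded.

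\paragraph{The missing idea.} Delete the ear instead of rebalancing it; this is the paper's route. Prove $\mu(G_{i-1})\ge\mu(G_i)-\max(P_i)$ and induct, so the competing set only has to be valid in $G_{i-1}$. Take $J$ optimal in $G_i$. If $w(J\cap P_i)\le w(P_i)/2$, then $w(J\cap P_i)\le\max(P_i)$ and $J\cap E(G_{i-1})$ works.

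Otherwise, give each edge signed length $+w$ off $J$ and $-w$ on $J$; $J$ is valid iff no cycle has negative signed length. Take a $u$--$v$ path $Q$ in $G_{i-1}$ \emph{minimizing} its signed length $w(Q)-2w(J\cap Q)$. This is the opposite of the direction you took from the wording of Lemma~\ref{le1}. With your direction, on a unit-weight $4$-cycle with $J=\emptyset$ and $u,v$ adjacent, you would pick the $3$-edge path, and switching on it gives an invalid set.

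Replacing $J\cap Q$ by $Q\setminus J$ keeps validity in $G_{i-1}$. To see this, add an auxiliary $uv$-edge of signed length $-(w(Q)-2w(J\cap Q))$. Then $Q$ plus this edge is a zero-length cycle. Switching on a zero-length cycle preserves validity, because its symmetric difference with any cycle splits into edge-disjoint cycles.

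Validity of $J$ on $P_i\cup Q$ gives a gain $w(Q)-2w(J\cap Q)\ge 2w(J\cap P_i)-w(P_i)$. Hence the new set has weight at least $w(J)-w(P_i\setminus J)$. Since $w(P_i\setminus J)<w(P_i)/2$, we get $w(P_i\setminus J)\le\max(P_i)$. In the example, this turns $\{xv\}$ into $\{uv\}$ in $G_1$, of weight $19\ge 20-2$.
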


\begin{proof}
The proof is by induction on the number of ears.  
For $t = 1$, we clearly have $\mu(G) = \max(P_1)$.  

Assume the claim holds for all ear decompositions with at most $k$ ears.  
We prove it for $t = k+1$.  
Let $G = (V,E)$ be obtained from $G' = (V',E')$ by adding the ear $P_{k+1}$, i.e., $G' = G_k$.  

By the induction hypothesis,
\[
\mu(G') \;\leq\; \sum_{i=1}^{k} \max(P_i).
\]
Thus, it is enough to show that
\[
\mu(G') \;\geq\; \mu(G) - \max(P_{k+1}).
\]

Let $u$ and $v$ be the (not necessarily distinct) endpoints of $P_{k+1}$.  
Let 
\[
E^*(G) = \{e^*_1, \dots, e^*_{t'}\}
\] 
be an optimal solution to the weighted max--min  $T$-join problem for $G$, i.e., a valid set of edges with maximum total weight.  

We now consider two cases:

\begin{itemize}
    \item If 
    \[
    \sum_{e \in P_{k+1} \cap E^*(G)} w(e) \;\leq\; \frac{w(P_{k+1})}{2},
    \]
    then in this case $\{e\in P_{k+1} \cap E^*(G)\}$ gives a valid set of edges for $P_{k+1}$, so we have
    \[
    \sum_{e \in P_{k+1} \cap E^*(G)} w(e) \;\leq\; \max(P_{k+1}).
    \]

    By taking the edges in $E^*(G) \cap E(G')$, we obtain a valid set of edges for $G'$,  
    with cost at least $\mu_w(G) - \max(P_{k+1})$.  
    Hence $\mu(G') \geq \mu(G) - \max(P_{k+1})$.  

     \item If 
    \[
    \sum_{e \in P_{k+1} \cap E^*(G)} w(e) \;>\; \frac{w(P_{k+1})}{2},
    \]
    then in the optimal solution for $G$, any cycle formed by $P_{k+1}$ and another path $P'$ 
    connecting $u$ and $v$ must satisfy
    \[
    \sum_{e \in P_{k+1} \cap E^*(G)} w(e) \;+\; \sum_{e \in P' \cap E^*(G)} w(e) 
    \;\leq\; \tfrac{1}{2}\big(w(P_{k+1}) + w(P')\big).
    \]
    Now consider a path $P''$ connecting $u$ and $v$ in $G'$, chosen to minimize 
    \[
   ( 2\sum_{e \in P'' \cap E^*(G)} w(e) )- w(P'').
    \]

    Let $E_1=e \in P'' \cap E^*(G)$ and $E_2=E(P'')/E_1$.
    By Lemma \ref{le1}, replacing the edges of $E_1$ with $E_2$ in $E^*(G)$, gives a valid selection of edges in $G'$. The cost of these edges is at least $\mu(G) - \max(P_{k+1})$. So, we have $\mu(G')\geq \mu(G) - \max(P_{k+1})$.
\end{itemize}

Thus, in both cases, $\mu(G') \geq \mu(G) - \max(P_{k+1})$.  
By the induction hypothesis, the theorem follows.
\end{proof}

If, in Theorem~\ref{upperbound}, we set all edge weights to $1$, then we obtain Inequality~\ref{f}. 
This is because, for an even ear $P_{\text{even}}$, we have $\max(P_{\text{even}}) = \frac{|P_{\text{even}}|}{2}$, 
and for an odd ear $P_{\text{odd}}$, we have $\max(P_{\text{odd}}) = \frac{|P_{\text{odd}}| - 1}{2}$.

According to Theorem~\ref{upperbound}, for a given graph $G$, we can compute an upper bound by first contracting the bridge edges, then constructing an ear decomposition, and finally computing $\max(P)$ for each ear $P$.
Note that in general, given a path $P$, computing $\text{max}(P)$ is NP-hard, as it is equivalent to the knapsack problem \cite{garey1979computers,karp1972reducibility}. However, there exists an FPTAS \cite{ibarra1975fast,vazirani2001approximation} with running time $O(n^2/\epsilon)$ that can compute a $(1-\epsilon)$-approximation for $\max(P)$ for any $\epsilon > 0$.

\paragraph{Remark.}
By Theorem~\ref{upperbound}, any ear decomposition of a graph $G$ gives an upper bound for $\mu(G)$.  
The open problem is to find an optimal ear decomposition $\{P^*_1, \dots, P^*_t\}$ that minimizes
\[
\sum_{i=1}^t \max(P_i).
\]

In the unweighted case, Frank \cite{frank93} proved that
\[
\mu(G) = \sum_{i=1}^t \max(P^*_i),
\]
and also gave a polynomial-time algorithm to compute this value.  
But, in the weighted case, equality may not hold. For example, in Figure~\ref{ex}, the graph $G$ has $\mu(G)=9+\epsilon$, while the optimal ear decomposition gives $9+2\epsilon$.  
Thus, unlike the unweighted case, the optimal ear decomposition in the weighted setting provides only an upper bound for $\mu(G)$.

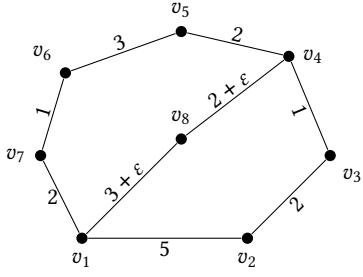
\begin{figure}
\begin{tikzpicture}[scale=1.1,
  vertex/.style={circle,fill,inner sep=1.5pt},
  weight/.style={fill=none,draw=none,inner sep=1pt}]

  \node[vertex,label=below:$v_1$] (v1) at (0,0) {};
  \node[vertex,label=below:$v_2$] (v2) at (2,0) {};
  \node[vertex,label=below right:$v_3$] (v3) at (3,1) {};
  \node[vertex,label=right:$v_4$] (v4) at (2.5,2.2) {};
  \node[vertex,label=above:$v_5$] (v5) at (1.2,2.5) {};
  \node[vertex,label=above left:$v_6$] (v6) at (-0.2,2) {};
  \node[vertex,label=left:$v_7$] (v7) at (-0.5,1) {};
  \node[vertex,label=above:$v_8$] (v8) at (1.2,1.2) {}; 

  \draw (v1) -- node[weight,below] {5} (v2);
  \draw (v2) -- node[weight,sloped,below] {2} (v3);
  \draw (v3) -- node[weight,sloped,below] {1} (v4);
  \draw (v4) -- node[weight,sloped,above] {2} (v5);
  \draw (v5) -- node[weight,sloped,above] {3} (v6);
  \draw (v6) -- node[weight,sloped,above] {1} (v7);
  \draw (v7) -- node[weight,left] {2} (v1);

  \draw (v1) -- node[weight,sloped,above] {$3+\varepsilon$} (v8);
  \draw (v8) -- node[weight,sloped,above] {$2+\varepsilon$} (v4);

\end{tikzpicture}

\caption{An example of a graph $G$ such that $\mu(G)=9+\epsilon\neq \sum \max(P_i^*)=9+2\epsilon$, where $\epsilon$ is a small constant number . }

\label{ex}

\end{figure}

\section{Exact algorithm for weighted max–min $T$-join in $(1,2)$-graphs}
\label{12graph}
In this section, we study the case where $G$ is a $(1,2)$-graph, that is, a complete graph with edge weights $1$ or $2$.  
For this setting, we present a simple algorithm that computes the exact solution to the weighted max--min $T$-join problem.  
We denote the cost of the weighted max--min $T$-join of a $(1,2)$-graph $G$ by $\mu_{(1,2)}(G)$.

First we present the following theorem.

\begin{theorem}
\label{th12}
 Let $G = (V, E)$ be a $(1,2)$-graph with $n$ vertices.  
If $n$ is even, then $\mu_{1,2}(G) = mwm(v_1, \dots, v_n)$.  
If $n$ is odd, there exists a vertex $v_i \in V$ such that 
$\mu_{1,2}(G) = mwm(v_1, \dots, v_n \setminus \{v_i\})$.
\end{theorem}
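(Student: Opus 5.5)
The idea is a monotonicity statement: in a $(1,2)$-graph, adding two vertices to an even set never decreases the minimum-weight perfect matching. From this the theorem follows at once. For $n$ even, any even $T\subseteq V$ can be grown to $V$ two vertices at a time, so $mwm(T)\le mwm(V)$ and hence $\mu_{(1,2)}(G)=mwm(V)$. For $n$ odd, any even $T$ misses at least one vertex $v_i$. Growing $T$ inside $V\setminus\{v_i\}$ gives $mwm(T)\le mwm(V\setminus\{v_i\})$. Maximizing over $i$ then gives the claim, since each $V\setminus\{v_i\}$ is itself a feasible even set.

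\emph{Key lemma.} Let $S$ be even and $a,b\notin S$. Then $mwm(S\cup\{a,b\})\ge mwm(S)$.

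To prove it, take a minimum perfect matching $M$ of $S\cup\{a,b\}$.
\begin{itemize}
\item If $a$ and $b$ are matched to each other, then $M\setminus\{ab\}$ is a perfect matching of $S$ and $w(M)\ge mwm(S)$.
\item Otherwise $a$ is matched to $x$ and $b$ to $y$, with $x,y\in S$. Replace the edges $ax$ and $by$ by the single edge $xy$. This gives a perfect matching of $S$ of weight $w(M)-w(ax)-w(by)+w(xy)$.
\end{itemize}
In the second case it therefore suffices to check that $w(xy)\le w(ax)+w(by)$. This holds because every weight lies in $\{1,2\}$: the left side is at most $2$ and the right side is at least $2$. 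No triangle inequality is needed beyond this, and the argument uses that $G$ is complete, so the edge $xy$ exists. Hence $mwm(S)\le w(M)$ in both cases.

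\emph{Bridging to the $T$-join formulation.} The paper defines $\mu$ through the shortest-path metric of $G$. In a $(1,2)$-graph that metric coincides with the edge weights, because any path of two or more edges has length at least $2$. So $\mu_{(1,2)}(G)=\max_T mwm(T)$ with $mwm$ computed directly from the edge weights, and the lemma applies as stated. The one detail to double-check is this metric identification; after that, the only substantive step is the edge-swap inequality $w(xy)\le w(ax)+w(by)$, which is immediate from the weight restriction.
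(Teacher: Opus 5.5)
Your proof is correct and uses the same edge-swap as the paper. If the two added vertices are matched to each other, the matching only grows; if they are matched to $x,y\in S$, replacing $ax,by$ by $xy$ costs nothing because $w(xy)\le 2\le w(ax)+w(by)$. The paper argues by contradiction from an optimal set of maximum cardinality, whereas you state the monotonicity $mwm(S\cup\{a,b\})\ge mwm(S)$ directly, which is slightly cleaner and makes the $\max_i$ over $V\setminus\{v_i\}$ in the odd case immediate.
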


\begin{proof}
We show that there exists a solution to the weighted max--min $T$-join problem for a $(1,2)$-graph that uses almost all vertices 
(all vertices if $n$ is even, and all but one if $n$ is odd).  

Suppose $v^*_1, \dots, v^*_{2k}$ is an optimal solution for the weighted max--min $T$-join with the maximum possible number of selected vertices 
among all optimal sets. Then we show that  $2k \geq n-1$. Otherwise, we could add two more vertices $v_i$ and $v_j$ to 
$\{v^*_1, \dots, v^*_{2k}\}$ such that
\[
mwm(v^*_1, \dots, v^*_{2k}, v_i, v_j) = mwm(v^*_1, \dots, v^*_{2k}),
\]
which contradicts the maximality of $2k$.

Assume that $2k< n-1$.
Choose two arbitrary vertices not in $\{v^*_1, \dots, v^*_{2k}\}$. 
There are two cases:
\begin{enumerate}
    \item $v_i$ and $v_j$ are matched together in the minimum weight perfect matching of 
$\{v^*_1, \dots, v^*_{2k}, v_i, v_j\}$ in $G$. 
This is impossible, since in that case
\[
mwm(v^*_1, \dots, v^*_{2k}, v_i, v_j) 
= d(v_i, v_j) + mwm(v^*_1, \dots, v^*_{2k}),
\]
which contradicts the optimality of $\{v^*_1, \dots, v^*_{2k}\}$.
\item  $v_i$ and $v_j$ are not matched together in the minimum weight perfect matching of 
$\{v^*_1, \dots, v^*_{2k}, v_i, v_j\}$. 
Without loss of generality, assume $v_i$ is matched with $v^*_1$ and $v_j$ with $v^*_2$. 
Then
\[
mwm(v^*_1, \dots, v^*_{2k}, v_i, v_j) 
= d(v^*_1, v_i) + d(v^*_2, v_j) + mwm(v^*_3, \dots, v^*_{2k}).
\]
Thus, since $1\leq d(v^*_1, v_i)$ and $1\leq d(v^*_2, v_j)$,
\[
2 + mwm(v^*_3, \dots, v^*_{2k}) \;  \leq\; mwm(v^*_1, \dots, v^*_{2k}, v_i, v_j).
\]

Since $\{v^*_1,
\dots, v^*_{2k}\}$ is optimal solution for weighted max-min $T$-join problem, we must also have

\begin{eqnarray*}
2 + mwm(v^*_3, \dots, v^*_{2k})\;  &\leq& mwm(v^*_1,\dots v^*_{2k},v_i,v_i)\\
&\leq&  mwm(v^*_1, \dots, v^*_{2k}).
\end{eqnarray*}

Another possible (non-minimal) matching is to pair $v^*_1$ with $v^*_2$ and then take the 
minimum matching on $\{v^*_3, \dots, v^*_{2k}\}$. This gives

\begin{eqnarray*}
    mwm(v^*_1, \dots, v^*_{2k}) &\leq& d(v^*_1, v^*_2) + mwm(v^*_3, \dots, v^*_{2k})\\
    &\leq& 2 + mwm(v^*_3, \dots, v^*_{2k}).    
\end{eqnarray*}

Hence,
\[
mwm(v^*_1, \dots, v^*_{2k}) = 2 + mwm(v^*_3, \dots, v^*_{2k}).
\]

On the other hand,
\[
mwm(v^*_1, \dots, v^*_{2k}) 
= 2 + mwm(v^*_3, \dots, v^*_{2k}) 
\;\leq\; mwm(v^*_1, \dots, v^*_{2k}, v_i, v_j).
\]

This implies
\[
mwm(v^*_1, \dots, v^*_{2k}) = mwm(v^*_1, \dots, v^*_{2k}, v_i, v_j),
\]
contradicting the assumption that $\{v^*_1, \dots, v^*_{2k}\}$ is an optimal solution of maximum size.  
Therefore, $2k \geq n-1$.
   
\end{enumerate}
 
\end{proof}

So, based on Theorem \ref{th12}, we present the following algorithm for computing $\mu_{1,2}(G)$.

If, $n$ is even
\begin{itemize}
    \item Compute a maximum matching on the vertices with weight $1$. i.e, compute the maximum matching of  the spanning graph with edge $1$ edges.
    \item Compute a maximum matching on the remaining vertices. Obviously the weight of each edge is $2$ in this matching, because otherwise, that edge should have been selected in the first step. 
\end{itemize}
Let the cost of the maximum matching be $m_1$.  
Then the number of unmatched vertices is $n - 2m_1$.  
The cost of the maximum matching on these remaining vertices is  
$\frac{n - 2m_1}{2} $.
This is because all edges among the $n - 2m_1$ vertices must have weight $2$.  
Indeed, if any of these edges had weight $1$, we could add it to the $m_1$ edges already chosen, thereby increasing the cost of the maximum matching on the subgraph of weight-$1$ edges — a contradiction.

Thus, the total cost of the selected edges is  
$n - m_1,$  
which is exactly the cost of the minimum weight matching on all vertices of the graph.

If $n$ is odd, for each vertex $v_i$, remove $v_i$ from the graph and compute the minimum-weight matching for the remaining vertices, as explained for the even case.  
Let $MWM_i$ denote the cost of the minimum-weight matching obtained after removing $v_i$.  
Then,
\[
\mu_{1,2}(G) = \max_{1 \le i \le n} MWM_i.
\]

\paragraph{Remark.}
Although the $(1,2)$-case may appear restricted, it captures a range of practical scenarios. 
$(1,2)$-graphs model situations where relationships are essentially binary but may carry an occasional ``boost.'' 
A weight of $1$ represents a basic connection, while a weight of $2$ reflects a stronger or reinforced interaction. 
Such graphs arise naturally when ties can be classified as weak versus strong, adequate versus highly suitable, or low- versus high-capacity. 
Thus, $(1,2)$-graphs provide a simple yet useful abstraction for applications where interactions come in two discrete levels of strength.

We can also extend the proof of Theorem~\ref{th12} to the case where $G$ is a complete graph whose maximum-to-minimum edge-weight ratio is less than $2$. 
Thus, the result applies to a wide range of practical scenarios.

 \section{Experimental Evaluation}

We evaluate our algorithms on two publicly available datasets. 
In these datasets, vertices represent agents or participants, and edge weights 
capture compatibility, collaboration strength, or assignment utility. 
These instances provide natural scenarios for studying fairness, where the 
objective is to maximize the guaranteed (minimum) quality of pairings.  

For each scenario, we describe the dataset used, how the corresponding graphs 
are generated, and the algorithms applied to compute  approximate 
solutions. We then present and analyze the results, showing how the algorithms 
perform across different settings.

\subsection{Reviewer assignment}
\label{sube}
Large AI conferences often rely on reviewer--paper similarity scores to guide 
the assignment process. Motivated by this setting, we study how to design a 
reviewer assignment procedure. Suppose we have a pool of candidates and wish to 
select a subset as the reviewers for a conference. Our goal is to ensure that 
when each paper is assigned to two reviewers, those reviewers are not too similar. 
This guarantees that at least one reviewer brings a different perspective to the 
paper, improving fairness and reducing bias in the review process.  

In other words, we want to select a set of reviewers such that, even under the 
worst-case matching (where the most similar reviewers are paired), the overall 
matching still has maximum possible size. This ensures that the assignment step 
can maximize the worst-case assignment quality.  

\paragraph{Graph construction.} 
We extracted the DBLP collaboration graph\footnote{\url{https://dblp.org}}, where
vertices are authors and edge weights are their numbers of joint publications.

Because we lack actual reviewer lists, we built a synthetic pool of candidates:
we selected $70$ authors with many publications in top-tier theoretical computer
science conferences, together with their coauthors. Such authors are typically
among program committee members and reviewers.

For each \(n \in \{30,40,50,60,70\}\), we sampled \(n\) authors uniformly at random from this pool and formed the induced subgraph \(G\) on these vertices, but with a modification of the edge weights.
For each pair $v_i$ and $v_j$, 
let $c(v_i,v_j)$ denote the number of papers they have coauthored. We set $d(v_i, v_j) \;=\; \frac{1}{c(v_i,v_j)+1}$.
Smaller weights indicate stronger similarity, and this weighting makes the graph connected. To enforce the triangle inequality, set \(w(v_i,v_j)\) to be the shortest-path distance between \(v_i\) and \(v_j\). Using these distances, we compute upper and lower bounds for \(\mu(G)\).

\paragraph{Lower bound (Theorem~\ref{main11}).}
We implement the procedure from Section~\ref{secMMWM} as follows:
\begin{enumerate}
  \item Run the greedy algorithm from Section~\ref{secMMWM} to obtain an ordering of the vertices.  
  Let 
  \[
  mwm(v_1, \dots, v_{2t}) = \max_{1 \le i \le \lfloor n/2 \rfloor} mwm(v_1, v_2, \dots, v_{2i}),
  \]
  and set $\{v_1, \dots, v_{2t}\}$ as the group of candidate reviewers.  

  \item Record $opt_{2\lfloor n/2 \rfloor}=mwm(v_1, v_2, \dots, v_{2t})$ as the lower bound on $\mu(G)$ guaranteed by Theorem~\ref{main11}.
\end{enumerate}

\paragraph{Upper bound via ear decomposition (Theorem~\ref{upperbound}).}
Since the graph is a weighted complete graph, we construct an explicit ear decomposition as follows:

\begin{enumerate}
  \item Initialize $G_0$ with a single vertex $v_1$.
  \item Let $G_1$ be a Hamiltonian cycle, i.e., a cycle that visits all vertices.
  \item While there is an unused edge $e$ in the graph, let $P_i = e$ and set $G_i = G_{i-1} + P_i$.
\end{enumerate}

This process yields an ear decomposition $\mathcal{E}$ of $G$.  
For each trivial ear $P \in \mathcal{E}$, we have $\max(P) = 0$.  
For $G_1 = G_0 + P_1$, where $P_1$ is the Hamiltonian cycle, $\max(P_1) \leq w(P_1)/2$.  
Thus, $w(P_1)/2$ provides an upper bound on $\mu(G)$.

To obtain a tighter bound, we need a Hamiltonian cycle with the minimum total edge weight.  
This corresponds to the \emph{traveling salesman problem (TSP)}, which is NP-complete~\cite{DBLP:books/fm/GareyJ79} and has been widely studied.  
We use the 1.5-approximation algorithm by Christofides~\cite{christofides1976worst} in our experiments to compute such a cycle.  
Specifically, for each graph $G$, we compute a Hamiltonian cycle using Christofides' algorithm.  
Let $TSP_{1.5}(G)$ denote the cost of this cycle; then $TSP_{1.5}(G)/2$ serves as an upper bound for $\mu(G)$.

The results are presented in Table~\ref{t2}.  
For all datasets, the ratio between the upper and lower bounds is less than $1.19$,  
indicating that, in practice, the set of vertices selected by our greedy algorithm, $\{v_1,\dots,v_{2t}\}$, yields a minimum weight matching whose cost is close to the optimal value.

\begin{table}[h!]
\centering
\begin{tabular}{|c|c|c|c|}
\hline

$n$ & LB $opt_{2\lfloor n/2 \rfloor}$ & UB $\frac{TSP_{1.5}}{2}$& UP/LB \\
\hline
30 & 2.08 & 2.41& 1.16\\
40 & 4.39 & 5.18& 1.18 \\
50 & 6.04 & 6.86& 1.13\\
60 & 7.00 & 8.23&1.17 \\
70 & 7.94 & 9.48& 1.19\\
\hline
\end{tabular}
\caption{Lower and upper bounds on $\mu(G)$ and their ratio for reviewer assignment graphs with varying vertex counts.}
\label{t2}
\end{table}

In the previous experiment, we computed lower and upper bounds for $\mu(G)$.
In the next experiment, we use these bounds to obtain good approximate solutions for $\mu_{2k}(G)$.
So, we are given a set of candidate reviewers and we must select exactly $2k$ of them.  
The goal is to choose these $2k$ reviewers so that the cost of the worst-case matching is maximized.

\begin{table}[h!]
\centering
\footnotesize{
\begin{tabular}{|c|c|c|c|c|c|}
\hline
$2k$ & $mwm(v_1,\dots,v_{2k})$ & $opt_{2k}$&$2(1+H_{k-1})opt_{2k}$&$\frac{TSP_{1.5}}{2}$&UP/LB \\
\hline
 2  & 1.00 & 1.00 & 2.00  & 9.48 & 2.00 \\
 4  & 1.61 & 1.61 & 6.44  & 9.48 & 4.00 \\
 6  & 2.27 & 2.27 & 11.35 & 9.48 & 4.18 \\
 8  & 2.94 & 2.94 & 16.66 & 9.48 & 3.23 \\
10  & 3.39 & 3.39 & 20.90 & 9.48 & 2.80 \\
12  & 3.72 & 3.72 & 24.43 & 9.48 & 2.55 \\
14  & 4.25 & 4.25 & 29.33 & 9.48 & 2.23 \\
16  & 4.75 & 4.75 & 34.13 & 9.48 & 2.00 \\
18  & 5.01 & 5.01 & 37.25 & 9.48 & 1.89 \\
20  & 5.48 & 5.48 & 41.97 & 9.48 & 1.73 \\
22  & 5.76 & 5.76 & 45.26 & 9.48 & 1.65 \\
24  & 5.97 & 5.97 & 48.00 & 9.48 & 1.59 \\
26  & 6.20 & 6.20 & 50.88 & 9.48 & 1.53 \\
28  & 6.57 & 6.57 & 54.93 & 9.48 & 1.44 \\
30  & 6.73 & 6.73 & 57.23 & 9.48 & 1.41 \\
32  & 7.07 & 7.07 & 61.06 & 9.48 & 1.34 \\
34  & 7.19 & 7.19 & 62.77 & 9.48 & 1.32 \\
36  & 7.44 & 7.44 & 65.74 & 9.48 & 1.27 \\
38  & 7.46 & 7.46 & 66.66 & 9.48 & 1.27 \\
40  & 7.50 & 7.50 & 68.22 & 9.48 & 1.26 \\
42  & 7.63 & 7.63 & 70.16 & 9.48 & 1.24 \\
44  & 7.64 & 7.64 & 70.98 & 9.48 & 1.24 \\
46  & 7.64 & 7.64 & 71.68 & 9.48 & 1.24 \\
48  & 7.64 & 7.64 & 72.34 & 9.48 & 1.24 \\
50  & 7.73 & 7.73 & 73.84 & 9.48 & 1.23 \\
52  & 7.74 & 7.74 & 74.55 & 9.48 & 1.22 \\
54  & 7.80 & 7.80 & 75.73 & 9.48 & 1.22 \\
56  & 7.80 & 7.80 & 76.31 & 9.48 & 1.22 \\
58  & 7.90 & 7.90 & 77.85 & 9.48 & 1.20 \\
60  & 7.95 & 7.95 & 78.89 & 9.48 & 1.19 \\
62  & 7.95 & 7.95 & 79.42 & 9.48 & 1.19 \\
64  & 7.92 & 7.95 & 79.93 & 9.48 & 1.20 \\
66  & 7.86 & 7.95 & 80.43 & 9.48 & 1.21 \\
68  & 7.93 & 7.95 & 80.91 & 9.48 & 1.20 \\
70  & 7.92 & 7.95 & 81.38 & 9.48 & 1.20 \\
\hline
\end{tabular}
}
\caption{Lower and upper bounds of $\mu_{2k}(G)$ for the reviewer assignment graph with 70 vertices, computed for different values of $2k$. 
The lower bound is obtained from the minimum-weight matching on $\{v_1, \dots, v_{2k}\}$, 
and the upper bound is $\min\{2(1 + H_{k-1})\,opt_{2k},\, TSP_{1.5}/2\}$. 
The last column shows the ratio of the upper bound to the lower bound.
}

\label{t2k}
\end{table}

In our experiments, we consider a graph $G$ with $70$ vertices corresponding to $70$ authors, constructed as described earlier.  
Using the greedy algorithm (Section~\ref{mmwm}), we obtained an ordering of the vertices of $G$.  
For each $2k = 2, \dots, 70$, we computed the minimum-weight matching on $\{v_1, \dots, v_{2k}\}$.  
Clearly, $mwm(v_1, \dots, v_{2k})$ provides a lower bound for $\mu_{2k}(G)$.  

In Section~\ref{mmwm}, we established the following upper bound:
\[
\mu_{2k}(G) \leq 2(1 + H_{k-1})\, opt_{2k}.
\]
For each $2k = 2, \dots, 70$, we computed $2(1 + H_{k-1})\, opt_{2k}$ as an upper bound.  

Another upper bound for $\mu_{2k}(G)$ can be derived from the cost of a Hamiltonian cycle.  
Let $v^*_1, \dots, v^*_{2k}$ be an optimal solution to the weighted max--min $2k$-matching problem.  
Any Hamiltonian cycle on these vertices can be divided into two disjoint matchings, so the cost of the minimum-weight matching on $\{v^*_1, \dots, v^*_{2k}\}$ is at most half the cost of the Hamiltonian cycle.  

Hence, we seek a Hamiltonian cycle with the minimum total cost, i.e., the solution to the traveling salesman problem (TSP).  
As discussed earlier, we computed a $1.5$-approximation of the TSP on all $70$ points, with a total cost of $18.96$.  
Therefore, $9.49$ serves as an upper bound for $\mu_{2k}(G)$ for all values of $2k$.

Table~\ref{t2k} reports the computed lower and upper bounds for this graph.  
For $2k > 4$, the TSP-based upper bound is tighter than $2(1 + H_{k-1})\, opt_{2k}$.  
As shown, for the weighted max--min $2k$-matching problem, the vertices $v_1, \dots, v_{2k}$ selected by the greedy algorithm yield near-optimal solutions.  
In all cases, the ratio between the smallest upper bound and $mwm(v_1, \dots, v_{2k})$ is less than $4.18$.  
Hence, in practice, the set $\{v_1, \dots, v_{2k}\}$ provides a high-quality solution for the weighted max--min $2k$-matching problem on this graph.

Note that in this dataset, the upper bound given by half of the approximate TSP cost works well. 
However, it can be poor in some cases. 
For example, if \(G\) is a complete graph with \(n\) vertices and all edge weights equal to \(1\), 
then the TSP cost is \(n\), while \(\mu_{2k}(G) = k\). 
Thus, this upper bound is weak, whereas the bound \(2(1 + H_{k+1})\,opt_{2k}\) provides a better estimate.
Therefore, in practice, it is reasonable to use both upper bounds and select the minimum between them.

Also, note that $opt_{2k}$ is not a lower bound for $\mu_{2k}(G)$. 
This is because the number of vertices whose minimum weight matching 
equals $opt_{2k}$ can be less than $2k$. As a result, $opt_{2k}$ may 
be larger than $\mu_{2k}(G)$. 
For example, place $2n$ points on a line with $p_{2i}=i$ and 
$p_{2i+1}=i+\epsilon$. In this case, $\mu_{2n}(G)=n\epsilon$, while 
$opt_{2}=opt_{2n}\geq n/2$, since starting from any point and matching it 
to the farthest point gives a cost of at least $n/2$.
Table \ref{t2k} also shows that $opt_{66} = mwm(v_1,\dots,v_{60}) = 7.95$, 
while $mwm(v_1,\dots,v_{66}) = 7.86$. This means we cannot guarantee 
the existence of $66$ vertices whose minimum weight matching is at least 
$opt_{66}=7.95$, since the value $7.95$ already comes from the matching 
of only $60$ vertices.

\subsection{Fair Mutual-Aid Pairing for Mobile Services}
Let $G=(V,E)$ be a graph where vertices represent cities and each edge $(i,j)$ has a travel distance $d(i,j)$.  
We want to form a matching (pairs of cities) to share a mobile resource (e.g., ambulance, repair crew, or relief team).  
The objective is to maximize the minimum distance between paired cities, so that the pairs are spread out and provide diverse services.  

We define the service-quality weight for each pair $(i,j)$ from their distance as follows:

\[
w(i,j) \;=\text{distance between $i$ and $j$}
\]

We aim to select a set of cities such that, even under the worst-case matching, the overall matching has the largest possible value.  
In fairness terms, this corresponds to maximizing the service quality of the worst-off pair.  

In our experiments, we considered $n = 10, 20, 30, \dots, 70$ major cities worldwide and constructed the corresponding graphs.  
For each graph $G$, we computed both the upper and lower bounds for $\mu(G)$, as described in Subsection~\ref{sube}.  
The results are summarized in Table~\ref{t3}.

\begin{table}[h!]
\centering
\begin{tabular}{|c|c|c|c|}
\hline
$n$ & LB $opt_{2\lfloor n/2 \rfloor}$ & UB $\frac{TSP_{1.5}}{2}$&  UP/LB \\
\hline
10  & 18534 & 20864&1.12 \\
20  & 22638 & 25275 & 1.11\\
30  & 27300 & 34966 &1.28 \\
40  & 30827 & 40844 &1.32 \\
50  & 36900 & 52653 &1.41 \\
60  & 37500 & 55665& 1.48\\
70  & 38200 & 55797&1.46  \\

\hline
\end{tabular}
\caption{Lower and upper bounds on $\mu(G)$ and their ratio for city distance graphs with varying vertex counts.}
\label{t3}

\end{table}

In our next experiment, similar to the reviewer assignment scenario, we considered the case where exactly $2k$ vertices must be selected.  
Using the distance graph of $70$ major cities, we computed both the lower and upper bounds for $\mu_{2k}(G)$.  

For the lower bound, we applied the greedy algorithm to compute $mwm(v_1, \dots, v_{2k})$, where vertices were added greedily for each $2k = 2, \dots, 70$.  
For the upper bound, we computed both $2(1 + H_{k-1})\, opt_{2k}$ and half of the TSP cost obtained from the 1.5-approximation algorithm applied to all vertices.  

The results are summarized in Table~\ref{t2b}. For $2k > 2$, the TSP-based upper bound is tighter than $2(1 + H_{k-1})\, opt_{2k}$.  
As shown, for the weighted max--min $2k$-matching problem, the vertices $v_1, \dots, v_{2k}$ selected by the greedy algorithm yield near-optimal solutions.  
For all values of $2k$, the ratio between the smallest upper bound and $mwm(v_1, \dots, v_{2k})$ is less than $2$ except for $2k=4,6$. 
 Hence, in practice, the set $\{v_1, \dots, v_{2k}\}$ provides an effective solution for the weighted max--min $2k$-matching problem on this dataset.

\begin{table}[h!]
\centering
\footnotesize{
\begin{tabular}{|c|c|c|c|c|c|}
\hline
$2k$ & $mwm(v_1,\dots,v_{2k})$ & $opt_{2k}$&$2(1+H_{k-1})opt_{2k}$&$\frac{TSP_{1.5}}{2}$&UP/LB \\
\hline

 2  & 18567 & 18567 & 37134  & 55797 & 2.00 \\
 4  & 18578 & 18578 & 74310  & 55797 & 3.00 \\
 6  & 24499 & 24499 & 122497 & 55797 & 2.28 \\
 8  & 27103 & 27103 & 153582 & 55797 & 2.06 \\
10  & 30595 & 30595 & 188668 & 55797 & 1.82 \\
12  & 29331 & 30595 & 200906 & 55797 & 1.90 \\
14  & 28365 & 30595 & 211104 & 55797 & 1.97 \\
16  & 28017 & 30595 & 219845 & 55797 & 1.99 \\
18  & 30958 & 30958 & 230192 & 55797 & 1.80 \\
20  & 31076 & 31076 & 237979 & 55797 & 1.80 \\
22  & 33241 & 33241 & 261202 & 55797 & 1.68 \\
24  & 32919 & 33241 & 267246 & 55797 & 1.70 \\
26  & 34158 & 34158 & 280315 & 55797 & 1.63 \\
28  & 34651 & 34651 & 289693 & 55797 & 1.61 \\
30  & 31817 & 34651 & 294643 & 55797 & 1.75 \\
32  & 32696 & 34651 & 299263 & 55797 & 1.71 \\
34  & 33270 & 34651 & 303579 & 55797 & 1.68 \\
36  & 33782 & 34651 & 307616 & 55797 & 1.65 \\
38  & 34152 & 34651 & 311399 & 55797 & 1.63 \\
40  & 34617 & 34651 & 314949 & 55797 & 1.61 \\
42  & 35067 & 35067 & 322458 & 55797 & 1.59 \\
44  & 35223 & 35223 & 327249 & 55797 & 1.58 \\
46  & 35259 & 35259 & 330787 & 55797 & 1.58 \\
48  & 37133 & 37133 & 351598 & 55797 & 1.50 \\
50  & 37355 & 37355 & 356811 & 55797 & 1.49 \\
52  & 38075 & 38075 & 366734 & 55797 & 1.47 \\
54  & 38080 & 38080 & 369713 & 55797 & 1.47 \\
56  & 38269 & 38269 & 374384 & 55797 & 1.46 \\
58  & 36549 & 38269 & 377117 & 55797 & 1.53 \\
60  & 36019 & 38269 & 379757 & 55797 & 1.55 \\
62  & 34234 & 38269 & 382308 & 55797 & 1.63 \\
64  & 34230 & 38269 & 384777 & 55797 & 1.63 \\
66  & 35435 & 38269 & 387169 & 55797 & 1.57 \\
68  & 34158 & 38269 & 389488 & 55797 & 1.63 \\
70  & 33914 & 38269 & 391739 & 55797 & 1.65 \\
\hline
\end{tabular}
}
\caption{Lower and upper bounds of $\mu_{2k}(G)$ for the 70-city distance graph, computed for various values of $2k$. 
The lower bound is obtained from the minimum-weight matching on $\{v_1, \dots, v_{2k}\}$, 
and the upper bound is $\min\{2(1 + H_{k-1})\,opt_{2k},\, TSP_{1.5}/2\}$. 
The last column shows the ratio of the upper bound to the lower bound.
}
\label{t2b}
\end{table}

\subsection{Remark} 

To compute an upper bound for $\mu(G)$, we applied a simple greedy ear decomposition. 
Although more sophisticated decompositions could be considered, our results indicate 
that even this straightforward approach yields strong bounds in practice. 
In practice, one may therefore consider multiple ear decompositions and select the one giving 
the smallest bound.

\bibliographystyle{ACM-Reference-Format} 
\bibliography{sample}

@article{christofides1976worst,
  title={Worst-case analysis of a new heuristic for the traveling salesman problem},
  author={CHRISTOFIDES, N},
  journal={Report 388, Graduate School of Industrial Administration, Carnegie Mellon University},
  year={1976}
}

@book{DBLP:books/fm/GareyJ79,
  author       = {M. R. Garey and
                  David S. Johnson},
  title        = {Computers and Intractability: {A} Guide to the Theory of NP-Completeness},
  publisher    = {W. H. Freeman},
  year         = {1979},
  isbn         = {0-7167-1044-7},
  bibsource    = {dblp computer science bibliography, https://dblp.org}
}

@article{gonzalez85kcenter,
  author       = {Teofilo F. Gonzalez},
  title        = {Clustering to Minimize the Maximum Intercluster Distance},
  journal      = {Theor. Comput. Sci.},
  volume       = {38},
  pages        = {293--306},
  year         = {1985},
  url          = {https://doi.org/10.1016/0304-3975(85)90224-5},
  doi          = {10.1016/0304-3975(85)90224-5},
  bibsource    = {dblp computer science bibliography, https://dblp.org}
}

@article{iwata2013tjoin,
  author       = {Satoru Iwata and
                  R. Ravi},
  title        = {Approximating max-min weighted T-joins},
  journal      = {Oper. Res. Lett.},
  volume       = {41},
  number       = {4},
  pages        = {321--324},
  year         = {2013},
  url          = {https://doi.org/10.1016/j.orl.2013.03.004},
  doi          = {10.1016/J.ORL.2013.03.004},
  bibsource    = {dblp computer science bibliography, https://dblp.org}
}

@article{lovasz1972note,
  title={A note on factor-critical graphs},
  author={Lov{\'a}sz, L{\'a}szl{\'o}},
  journal={Studia Sci. Math. Hungar},
  volume={7},
  number={279-280},
  pages={11},
  year={1972}
}

@article{frank93,
  author       = {Andr{\'{a}}s Frank},
  title        = {Conservative weightings and ear-decompositions of graphs},
  journal      = {Comb.},
  volume       = {13},
  number       = {1},
  pages        = {65--81},
  year         = {1993},
  url          = {https://doi.org/10.1007/BF01202790},
  doi          = {10.1007/BF01202790},
  bibsource    = {dblp computer science bibliography, https://dblp.org}
}

@book{garey1979computers,
  author       = {M. R. Garey and
                  David S. Johnson},
  title        = {Computers and Intractability: {A} Guide to the Theory of NP-Completeness},
  publisher    = {W. H. Freeman},
  year         = {1979},
  isbn         = {0-7167-1044-7},
  bibsource    = {dblp computer science bibliography, https://dblp.org}
}

@inproceedings{karp1972reducibility,
  author       = {Richard M. Karp},
  editor       = {Raymond E. Miller and
                  James W. Thatcher},
  title        = {Reducibility Among Combinatorial Problems},
  booktitle    = {Proceedings of a symposium on the Complexity of Computer Computations,
                  held March 20-22, 1972, at the {IBM} Thomas J. Watson Research Center,
                  Yorktown Heights, New York, {USA}},
  series       = {The {IBM} Research Symposia Series},
  pages        = {85--103},
  publisher    = {Plenum Press, New York},
  year         = {1972},
  url          = {https://doi.org/10.1007/978-1-4684-2001-2\_9},
  doi          = {10.1007/978-1-4684-2001-2\_9},
  bibsource    = {dblp computer science bibliography, https://dblp.org}
}

@article{ibarra1975fast,
  author       = {Oscar H. Ibarra and
                  Chul E. Kim},
  title        = {Fast Approximation Algorithms for the Knapsack and Sum of Subset Problems},
  journal      = {J. {ACM}},
  volume       = {22},
  number       = {4},
  pages        = {463--468},
  year         = {1975},
  url          = {https://doi.org/10.1145/321906.321909},
  doi          = {10.1145/321906.321909},
  bibsource    = {dblp computer science bibliography, https://dblp.org}
}

@book{vazirani2001approximation,
  author       = {Vijay V. Vazirani},
  title        = {Approximation algorithms},
  publisher    = {Springer},
  year         = {2001},
  url          = {http://www.springer.com/computer/theoretical+computer+science/book/978-3-540-65367-7},
  isbn         = {978-3-540-65367-7},
  bibsource    = {dblp computer science bibliography, https://dblp.org}
}

@article{Sole-Zaslawsky,
  author       = {Patrick Sol{\'{e}} and
                  Thomas Zaslavsky},
  title        = {The Covering Radius of the Cycle Code of a Graph},
  journal      = {Discret. Appl. Math.},
  volume       = {45},
  number       = {1},
  pages        = {63--70},
  year         = {1993},
  url          = {https://doi.org/10.1016/0166-218X(93)90140-J},
  doi          = {10.1016/0166-218X(93)90140-J},
  bibsource    = {dblp computer science bibliography, https://dblp.org}
}

@inproceedings{okimoto15robustteam,
   author       = {Tenda Okimoto and
                  Nicolas Schwind and
                  Maxime Clement and
                  Tony Ribeiro and
                  Katsumi Inoue and
                  Pierre Marquis},
  editor       = {Gerhard Weiss and
                  Pinar Yolum and
                  Rafael H. Bordini and
                  Edith Elkind},
  title        = {How to Form a Task-Oriented Robust Team},
  booktitle    = {Proceedings of the 2015 International Conference on Autonomous Agents
                  and Multiagent Systems, {AAMAS} 2015, Istanbul, Turkey, May 4-8, 2015},
  pages        = {395--403},
  publisher    = {{ACM}},
  year         = {2015},
  url          = {http://dl.acm.org/citation.cfm?id=2772931},
  bibsource    = {dblp computer science bibliography, https://dblp.org}
}

@inproceedings{schwind21partial,
author       = {Nicolas Schwind and
                  Emir Demirovic and
                  Katsumi Inoue and
                  Jean{-}Marie Lagniez},
  editor       = {Frank Dignum and
                  Alessio Lomuscio and
                  Ulle Endriss and
                  Ann Now{\'{e}}},
  title        = {Partial Robustness in Team Formation: Bridging the Gap between Robustness
                  and Resilience},
  booktitle    = {{AAMAS} '21: 20th International Conference on Autonomous Agents and
                  Multiagent Systems, Virtual Event, United Kingdom, May 3-7, 2021},
  pages        = {1154--1162},
  publisher    = {{ACM}},
  year         = {2021},
  url          = {https://www.ifaamas.org/Proceedings/aamas2021/pdfs/p1154.pdf},
  doi          = {10.5555/3463952.3464086},
  bibsource    = {dblp computer science bibliography, https://dblp.org}
}

@inproceedings{demirovic18recoverable,
  author       = {Emir Demirovic and
                  Nicolas Schwind and
                  Tenda Okimoto and
                  Katsumi Inoue},
  editor       = {Elisabeth Andr{\'{e}} and
                  Sven Koenig and
                  Mehdi Dastani and
                  Gita Sukthankar},
  title        = {Recoverable Team Formation: Building Teams Resilient to Change},
  booktitle    = {Proceedings of the 17th International Conference on Autonomous Agents
                  and MultiAgent Systems, {AAMAS} 2018, Stockholm, Sweden, July 10-15,
                  2018},
  pages        = {1362--1370},
  publisher    = {International Foundation for Autonomous Agents and Multiagent Systems
                  Richland, SC, {USA} / {ACM}},
  year         = {2018},
  url          = {http://dl.acm.org/citation.cfm?id=3237903},
  bibsource    = {dblp computer science bibliography, https://dblp.org}
}

@inproceedings{bullinger24robustpopular,
  author       = {Martin Bullinger and
                  Rohith Reddy Gangam and
                  Parnian Shahkar},
  editor       = {Mehdi Dastani and
                  Jaime Sim{\~{a}}o Sichman and
                  Natasha Alechina and
                  Virginia Dignum},
  title        = {Robust Popular Matchings},
  booktitle    = {Proceedings of the 23rd International Conference on Autonomous Agents
                  and Multiagent Systems, {AAMAS} 2024, Auckland, New Zealand, May 6-10,
                  2024},
  pages        = {225--233},
  publisher    = {International Foundation for Autonomous Agents and Multiagent Systems
                  / {ACM}},
  year         = {2024},
  url          = {https://dl.acm.org/doi/10.5555/3635637.3662870},
  doi          = {10.5555/3635637.3662870},
  bibsource    = {dblp computer science bibliography, https://dblp.org}
}

@article{zhou2019robustonline,
 author       = {Yu{-}Hang Zhou and
                  Chen Liang and
                  Nan Li and
                  Cheng Yang and
                  Shenghuo Zhu and
                  Rong Jin},
  title        = {Robust Online Matching with User Arrival Distribution Drift},
  booktitle    = {The Thirty-Third {AAAI} Conference on Artificial Intelligence, {AAAI}
                  2019, The Thirty-First Innovative Applications of Artificial Intelligence
                  Conference, {IAAI} 2019, The Ninth {AAAI} Symposium on Educational
                  Advances in Artificial Intelligence, {EAAI} 2019, Honolulu, Hawaii,
                  USA, January 27 - February 1, 2019},
  pages        = {459--466},
  publisher    = {{AAAI} Press},
  year         = {2019},
  url          = {https://doi.org/10.1609/aaai.v33i01.3301459},
  doi          = {10.1609/AAAI.V33I01.3301459},
  bibsource    = {dblp computer science bibliography, https://dblp.org}
}

@inproceedings{bian2022robustsubset,
   author       = {Chao Bian and
                  Yawen Zhou and
                  Chao Qian},
  editor       = {Luc De Raedt},
  title        = {Robust Subset Selection by Greedy and Evolutionary Pareto Optimization},
  booktitle    = {Proceedings of the Thirty-First International Joint Conference on
                  Artificial Intelligence, {IJCAI} 2022, Vienna, Austria, 23-29 July
                  2022},
  pages        = {4726--4732},
  publisher    = {ijcai.org},
  year         = {2022},
  url          = {https://doi.org/10.24963/ijcai.2022/655},
  doi          = {10.24963/IJCAI.2022/655},
  bibsource    = {dblp computer science bibliography, https://dblp.org}
}

@inproceedings{rangapuram15team,
 author       = {Syama Sundar Rangapuram and
                  Thomas B{\"{u}}hler and
                  Matthias Hein},
  editor       = {Daniel Schwabe and
                  Virg{\'{\i}}lio A. F. Almeida and
                  Hartmut Glaser and
                  Ricardo Baeza{-}Yates and
                  Sue B. Moon},
  title        = {Towards realistic team formation in social networks based on densest
                  subgraphs},
  booktitle    = {22nd International World Wide Web Conference, {WWW} '13, Rio de Janeiro,
                  Brazil, May 13-17, 2013},
  pages        = {1077--1088},
  publisher    = {International World Wide Web Conferences Steering Committee / {ACM}},
  year         = {2013},
  url          = {https://doi.org/10.1145/2488388.2488482},
  doi          = {10.1145/2488388.2488482},
  bibsource    = {dblp computer science bibliography, https://dblp.org}
}

@inproceedings{StelmakhSS19,
  author       = {Ivan Stelmakh and
                  Nihar B. Shah and
                  Aarti Singh},
  editor       = {Aur{\'{e}}lien Garivier and
                  Satyen Kale},
  title        = {PeerReview4All: Fair and Accurate Reviewer Assignment in Peer Review},
  booktitle    = {Algorithmic Learning Theory, {ALT} 2019, 22-24 March 2019, Chicago,
                  Illinois, {USA}},
  series       = {Proceedings of Machine Learning Research},
  volume       = {98},
  pages        = {827--855},
  publisher    = {{PMLR}},
  year         = {2019},
  url          = {http://proceedings.mlr.press/v98/stelmakh19a.html},
  bibsource    = {dblp computer science bibliography, https://dblp.org}
}

\end{document}